\documentclass[
 reprint,
 amsmath,amssymb,
 aps,
letterpaper]{quantumarticle}
\pdfoutput=1

\usepackage{graphicx}
\usepackage{hyperref}
\usepackage{braket}
\usepackage[linesnumbered, ruled, vlined, titlenotnumbered, noend]{algorithm2e}
\usepackage{amsmath, amsfonts, amssymb, amsthm, color}
\usepackage{multirow} 
\usepackage[capitalise]{cleveref}
\crefname{algocf}{alg.}{algs.}
\Crefname{algocf}{Algorithm}{Algorithms}

\newtheorem{definition}{Definition}
\newtheorem{proposition}{Proposition}

\theoremstyle{remark}
\newtheorem{remark}{Remark}

\newcommand{\q}{\textbf{q}}
\newcommand{\anc}{\textbf{a}}
\newcommand{\ve}{\textbf{v}}
\newcommand{\uu}{\textbf{u}}

\newcommand{\Z}{{\mathbb{Z}}}

\newcommand{\polyA}{{\cal A}}
\newcommand{\polyB}{{\cal B}}
\newcommand{\polyC}{{\cal C}}
\newcommand{\polyP}{{\cal P}}

\newcommand{\rank}{\text{rank}}
\newcommand{\row}{\text{row}}
\newcommand{\matspan}{\text{span}}

\usepackage{xspace}
\newcommand{\CxC}{\textsf{CxC}\xspace}
\newcommand{\CxR}{\textsf{CxR}\xspace}
\newcommand{\CTwo}{\textsf{C2}\xspace}

\begin{document}

\title{Cyclic Hypergraph Product Codes}

\author{Arda Aydin}
\affiliation{
    IonQ Inc.
}
\affiliation{Department of ECE and Institute for Systems Research, University of Maryland, College Park, MD 20742}

\author{Nicolas Delfosse}
\email{nicolas.delfosse@ionq.co}
\affiliation{
    IonQ Inc.
}

\author{Edwin Tham}
\email{tham@ionq.co}
\affiliation{
    IonQ Inc.
}


\begin{abstract}
Hypergraph product (HGP) codes are one of the most popular family of quantum low-density parity-check (LDPC) codes. Circuit-level simulations show that they can achieve the same logical error rate as surface codes with a reduced qubit overhead.
They have been extensively optimized by importing classical techniques such as the progressive edge growth, or through random search, simulated annealing or reinforcement learning techniques.
In this work, instead of machine learning (ML) algorithms that improve the code performance through local transformations, we impose additional global symmetries, that are hard to discover through ML, and we perform an exhaustive search.
Precisely, we focus on the hypergraph product of two cyclic codes, which we call \CxC codes and we study \CTwo codes which are the product a cyclic code with itself and \CxR codes which are the product of a cyclic codes with a repetition code.
We discover \CTwo codes and \CxR codes that significantly outperform previously optimized HGP codes, achieving better parameters and a logical error rate per logical qubit that is up to three orders of magnitude better.
Moreover, some \CTwo codes achieve simultaneously a lower logical error rate and a smaller qubit overhead than state-of-the-art LDPC codes such as the bivariate bicycle codes, at the price of a larger block length.
Finally, leveraging the cyclic symmetry imposed on the codes, we design an efficient planar layout for the QCCD architecture, allowing for a trapped ion implementation of the syndrome extraction circuit in constant depth.
\end{abstract}

\maketitle

\section{Introduction}

Constructing a large-scale quantum computer requires the use of quantum error-correction (QEC) to keep errors in check.
QEC confers robustness by redundantly encoding information across many qubits, in a code whose check operators can be measured without collapsing encoded logical qubits.

Low-density parity-check (LDPC) codes are widely used in classical information processing~\cite{gallager2003low, mackay2003information, richardson2008modern}.
Their generalization to the quantum setting was proposed as early as 2003 in~\cite{mackay2004sparse} which introduced several constructions of quantum LDPC codes.
Unlike surface codes~\cite{kitaev1997quantum, dennis2002topological, raussendorf2007fault, fowler2012surface}, quantum LDPC codes can achieve a constant encoding rate and a growing minimum distance.
However, all the constructions of ~\cite{mackay2004sparse} are limited to a minimum distance that is at most logarithmic in the block length.
In 2009, Tillich and Zémor proposed the HGP construction producing the first family of quantum LDPC codes with constant encoding rate and polynomial minimum distance~\cite{Tillich2014}.
These asymptotic results suggest a path to more efficient QEC but the design of efficient syndrome extraction circuits, layouts and decoders are needed to provide a practical advantage over surface codes.

It took until 2022 to demonstrate that HGP codes outperform surface codes with circuit-level simulations in~\cite{tremblay2022constant} by introducing more efficient syndrome extraction circuits, combined with code optimization techniques and decoders previously proposed in~\cite{grospellier2021combining}.
Moreover, \cite{tremblay2022constant} also proposes a layout based on four planar layers, compatible for example with superconducting qubits equipped with long-range couplers~\cite{marxer2023long, heya2025randomized}.

Several approaches have been explored to optimize HGP codes.
Given that they are built from a pair of classical linear codes, it is natural to select HGP codes through optimization of the classical input codes as in~\cite{connolly2024fast}. This is the role of the progressive edge growth algorithm~\cite{hu2001progressive} which produces high performing classical codes by eliminating local patterns, such as the so-called trapping set that degrade decoder performance. In the quantum setting, the challenge with this approach is that quantum decoders do not generally work in the exact same way as classical decoders and quantum trapping sets are not well understood because decoder design is still rapidly evolving~\cite{raveendran2021trapping, morris2023absorbing}.

The HGP codes of~\cite{grospellier2021combining, tremblay2022constant} were obtained by generating hundreds of random HGP codes and selecting the one achieving the lowest logical error rate for a specific noise rate with code capacity simulations.
The main limitation of that approach is that the code capacity model is too simplistic and it does not guarantee good performance in practice.
Circuit level simulations provide a more accurate indication of the real-world applicability of a code but they consume too much resources to simulate a large number of codes.
Machine learning algorithms were considered in~\cite{Freire2025} where HGP codes are optimized through local modifications of the code structure guided by random walks, simulated annealing or reinforcement learning, which leads to further improvement of previously selected HGP codes. We refer to these codes as ML-optimized HGP codes.

The main challenge in the optimization of HGP codes is the massive size of the search space. In this work, we circumvent this difficulty by imposing additional symmetries to the codes. Namely, we consider {\em cyclic HGP codes} that are the product of two cyclic codes and that we call \CxC codes.
We further restrict our focus by considering {\em symmetric cyclic HGP codes}, denoted $\CTwo$, which are the product of a cyclic code with itself, and {\em repeated cyclic HGP codes}, denoted \CxR, which are the product of a repetition code with a cyclic code.These symmetries sufficiently reduce the search space for us to perform an exhaustive search over all products of cyclic LDPC codes with check weights up to 5 and block length up to 40, resulting in HGP codes with stabilizer weight up to 10 and block length up to 3200.

Our circuit level simulations with physical error rate equal to $10^{-3}$ show that cyclic HGP codes perform far better than the ML-optimized HGP codes.
We found a $[[882, 50, 10]]$ \CTwo code that achieves a logical error rate per logical qubit below $2\times 10^{-8}$ where the ML-optimized $[[625, 25, 8]]$ code of \cite{Freire2025} only achieves $\approx 2\times 10^{-5}$.
Another example is a $[[450, 32, 8]]$ \CTwo HGP code which reaches the same minimum distance as the previous ML-optimized HGP codes with shorter block length and more logical qubits. Moreover, it achieves a logical error rate per logical qubit of $4.5\times 10^{-7}$.
Surprisingly, despite their very simple structure, even the \CxR codes achieve a significantly better logical error rate per logical qubit than the ML-optimized HGP codes.

A recent breakthrough is the discovery of bivariate bicycle (BB) codes which achieves better code parameters and logical error rate than surface codes and previous HGP codes, and can be implemented over two planar layers instead of four~\cite{Bravyi2024}.
Remarkably, \CxC codes achieves comparable performance to the BB codes in terms of logical error rate and qubit overhead.
We even found an example of \CTwo code that achieves simultaneously a lower logical error rate per logical qubit and a smaller qubit overhead than BB codes.
The price to pay with HGP codes is a larger block length.
The advantage of HGP codes might reside in their robustness against hook errors~\cite{manes2025distance}.

The cyclic structure does not only improve code performance but it also simplifies qubit layout.
We use this property to design a layout for \CxC codes over a $2 \times n$ array of qubits equipped with a cyclic shift inspired by the BB code layout of \cite{Tham2025}.
The first row contains the data qubits and the second row contains the ancilla qubits. The cyclic shift lets us align the ancilla qubits with the data qubits they need to interact with.
We obtain a constant depth syndrome extraction for \CxC codes that are LDPC.

The $2 \times n$ model, introduced in~\cite{siegel2024towards} and extended in~\cite{Tham2025}, is a special case of the QCCD architecture~\cite{kielpinski2002architecture} where qubits are placed on a $2 \times n$ array and qubit moves are limited to a cyclic shift, making our cyclic HGP codes practically relevant for quantum computing platforms with flying qubits capable of implementing a cyclic shift such as photonic qubits~\cite{knill2001scheme}, spin qubits~\cite{loss1998quantum}, electron on liquid helium~\cite{lyon2006spin}, trapped ions~\cite{cirac1995quantum} and neutral atoms~\cite{bluvstein2024logical}.
Our codes are also compatible with other HGP code layouts such as the spin qubit layout of~\cite{siegel2024towards}, the neutral atom layout of~\cite{xu2024constant, pecorari2025high} or the 2D local layouts of~\cite{delfosse2021bounds, berthusen2024partial} or the modular layout of~\cite{strikis2023quantum}.

The rest of this paper is organized as follows. \cref{sec:background} reviews relevant quantum LDPC code constructions. See~\cite{breuckmann2021quantum} for a more thorough review.
Cyclic HGP codes are studied in \cref{sec:cyclic_HGP_codes} and a layout for these codes is proposed in \cref{sec:Layout}.

\section{Background}
\label{sec:background}
{\em Stabilizer codes}~\cite{gottesman1997stabilizer} are QEC codes whose check operators are elements of a {\em stabilizer group}, $\mathcal{S}$.
That group's {\em stabilizer generators}, $S$, in turn are a set of mutually commuting Pauli operators: $\mathcal{S} = \langle S \rangle$, $S=\{s_1,...,s_n\}$, where $s_i\in \{I,X,Y,Z\}^n$ and $s_i s_j = s_j s_i$ for all $1\leq i,j \leq n$.

{\em CSS codes}~\cite{calderbank1996good, steane1996multiple} are stabilizer codes where the set of stabilizer generators can be partitioned into $X$ and $Z$ types, whose only non-trivial Pauli operators are $X$ and $Z$ respectively: $S = S_X \cup S_Z$, where each $s\in S_P$ is of the form $s\in\{I,P\}^n$, for $P\in \{X,Z\}$.
On a CSS code with $m_P = |S_P|$ stabilizer generators of each type, it is convenient to organize them into rows of parity check matrices $H_P \in \mathbb{F}_2^{m_P \times n}$, s.t. $\left[H_P\right]_{i,j}$ is 1 if $s_{P,i}\in S_P$ acts non-trivially on the $j$-th qubit, and 0 otherwise.
These matrices are so called, because the codewords of the code in the $X$ ($Z$) basis belong in $\ker{H_X}$ ($\ker{H_Z}$) respectively.
Requiring $H_X \cdot H_Z ^T = 0$ is equivalent to having stabilizer generators that mutually commute.

CSS codes may be constructed by elevating two compatible classical codes through the two-block construction~\cite{Kovalev2013}.
Given classical codes with parity check matrices (sometimes called ``seed matrices'') $A\in \mathbb{F}_2^{m_a \times n_a}$ and $B\in \mathbb{F}_2^{m_b \times n_b}$, a corresponding quantum CSS code might be constructed by defining $H_X = [A | B^T]$ and $H_Z = [B | A^T]$.
The two classical codes are said to be {\em compatible} when $m_a=n_a=m_b=n_b$ ({\em i.e.} $A$ and $B$ are square), and when $A B^T + B^T A = 0$ to ensure commutativity of the stabilizer generators.

{\em Hypergraph product} or HGP codes, are quantum CSS codes that are also constructed by elevating two classical codes.
Given classical codes $A$ and $B$ as before, and denoting by $I_n$ a $n\times n$ identity matrix, the corresponding HGP code is defined by: $H_X = [A\otimes I_{n_b} | I_{m_a}\otimes B^T]$ and $H_Z = [I_{n_a}\otimes B | A^T \otimes I_{m_b}]$, with the resulting parity check matrices having dimensions $H_X\in \mathbb{F}_2^{m_a n_b\times (n_a n_b+m_a m_b)}$, $H_Z \in \mathbb{F}_2^{n_a m_b \times (n_a n_b+m_a m_b)}$.
Note that under the HGP construction, there is far greater flexibility since $A$ and $B$ are no longer constrained; \emph{any} two classical codes will suffice.

A common building block for classical and quantum binary codes are sparse {\em cyclic matrices} $C$, which are $m\times n$ binary matrices with the form $C_{i,j} = f(j-i\pmod n)$, for some generating function $f:\mathbb{Z}_n \to \{0,1\}$ and $1\leq i,j\leq n$.
{\em Circulant matrices}, $Q_n^\ell$, are special cyclic matrices of dimension $n\times n$, wherein $f(j)=1$ iff $j=\ell$, and is 0 otherwise.

Several related quantum codes have similarly been built with sparse cyclic matrices.
The first such example, so-called bicycle codes by MacKay {\em et al.}~\cite{Mackay2004}, follows a standard two-block construction, with $A=B^T=C$ for some randomly chosen sparse cyclic matrix $C$.
Generalized bicycle (GB) codes, later introduced by Kovalev and Pryadko~\cite{Kovalev2013}, allowed for $A$ and $B$ to be distinct sparse cyclic matrices.
In that same work hyper-bicycle codes were introduced, which are in fact HGP codes, wherein $A$ and $B$ are sums of tensor products of arbitrary binary matrices with permutations of circulant matrices.
More recently, bivariate bicycle codes by Bravyi et al.~\cite{Bravyi2024}, introduced yet another generalization of the GB codes in which $A$ and $B$ are constructed as sums of tensor products of circulant matrices.

\section{Cyclic Hypergraph Product Code}
\label{sec:cyclic_HGP_codes}
Let us now describe our construction of hypergraph product codes built atop cyclic matrices, which we call the {\em cyclic hypergraph product code}. We refer to these codes as the \CxC codes.

\begin{definition} [The \CxC codes]
A cyclic hypergraph product code is an HGP code, parameterized by positive integers $a$ and $b$ along with univariate polynomials, $\polyA(x)$ and $\polyB(y)$, whose coefficients are in $\mathbb{F}_2$.
Therein $x$, $y$ are circulant matrices $Q_a$, $Q_b$ respectively.
The \CxC code has parity check matrices $H_X = [\polyA\otimes I_b | I_a \otimes \polyB]$ and $H_Z = [I_a \otimes \polyB^T | \polyA^T\otimes I_b]$.
\label{def:cHGP}
\end{definition}

In other words, a \CxC code is the hypergraph product of two cyclic codes.
The \CxC code inherits convenient attributes common to HGP codes generally.
Its block length can be immediately deduced to be $n=2ab$.
Further, let's denote by $r_a$, $r_b$ the ranks of $\polyA$ and $\polyB$  respectively; denote by $d_M$ the minimum distance of the classical code with parity check matrix $M$; and by $w(\polyP)$ the number of summands in polynomial $\polyP$.
Then, the \CxC code encodes $k=2(a-r_a)(b-r_b)$ logical qubits, with minimum distance $d_{\CxC}=\min_{M\in\{\polyA, \polyB\}} d_M$.
These follow directly from previous results for general HGP codes~\cite{Tillich2014}, and the fact that the classical codes whose parity-checks matrices are cyclic matrices $\polyA$ and $\polyB$ are equivalent to the transpose codes associated with $\polyA^T$ and $\polyB^T$, up to reversal of bit labels.

\begin{remark}
The hyper-bicycle code of~\cite{Kovalev2013} introduced a construction that seemed related to ours.
However, each classical code in its HGP construction, is itself a tensor product of the form $\sum_i C_i \otimes M_i$, which introduces added structure (via cyclic matrices $C_i$) on a set of arbitrary matrices $M_i$.
We eschew the additional tensor product, and instead impose cyclicality directly on $M_i$.
\end{remark}

\begin{remark}
The BB codes of~\cite{Bravyi2024,Ye2025} depart from a HGP construction by replacing $\polyA(x)\otimes I$ by $\polyP_1(x,y)$ and $I\otimes \polyB(y)$ by $\polyP_2(x,y)$, and then following a standard CSS construction.
Therein, $\polyP_1, \polyP_2$ are bivariate polynomials in $x,y$, with coefficients in $\mathbb{F}_2$.
\end{remark}

\begin{remark}
Our construction includes well-known members of HGP codes.
Notably, the standard $[[2d^2,2,d]]$ toric code is obtainable by setting $\polyA(x) = 1+x$ and $\polyB(y) = 1+y$, and $a=b=d$.
\end{remark}

\begin{remark}
The La-cross codes of \cite{Pecorari2025} follow a similar construction.
Specifically, it is related to \CxC codes with $a=b$ and generating polynomials $\polyA(x)=\polyB(x)=1+x+x^k$ for a range of $k$'s.
But the authors opted to reduce block length by using only the minimal set of independent rows of $\polyA$ and $\polyB$.
Then, the resulting parity check matrices have full row rank, the codes corresponding to their transpose is trivial, and the number of logical qubits encoded in the La-cross codes is reduced by a factor two.
\end{remark}

\begin{proposition}\label{prop:BalanceCriteria}
The \CxC code has equal numbers of $X$ and $Z$ stabilizer generators, $\rank(H_X) = \rank(H_Z) =  a r_b + b r_a - r_a r_b$.
Moreover, each stabilizer generator is of equal weight $\omega$, given by $\omega=w(\polyA) + w(\polyB)$.
\end{proposition}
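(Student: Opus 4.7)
The proposition asserts two independent facts about the block structure of $H_X$ and $H_Z$: a uniform row-weight statement and a rank formula. My plan is to reduce both to elementary properties of Kronecker products, using cyclicity only to guarantee constant row weight.

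For the weight claim, I would use that the row weight of a Kronecker product $M\otimes N$ equals the product of the individual row weights. Every row of a cyclic matrix has the same Hamming weight, equal to the number of terms in its defining polynomial, so all rows of $\polyA$ have weight $w(\polyA)$ and all rows of $\polyB$ have weight $w(\polyB)$. The identities $I_a$ and $I_b$ have rows of weight $1$, so each row of $\polyA\otimes I_b$ has weight $w(\polyA)$ and each row of $I_a\otimes\polyB$ has weight $w(\polyB)$. Concatenating the two blocks yields the claimed row weight $w(\polyA)+w(\polyB)$ for $H_X$, and the same calculation, noting that $\polyA^T$ and $\polyB^T$ are also cyclic with the same row weights, applies to $H_Z$.

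For the rank claim, I would compute $\rank(H_X)$ as the dimension of the column span $\text{col}(\polyA\otimes I_b)+\text{col}(I_a\otimes \polyB)$ and then apply inclusion–exclusion. Using the standard identity $\rank(M\otimes N)=\rank(M)\rank(N)$, the two summands have dimensions $r_a b$ and $a r_b$ respectively, so the work collapses to computing the dimension of their intersection. Identifying $\F_2^{ab}\cong\F_2^a\otimes\F_2^b$, one has $\text{col}(\polyA\otimes I_b)=\text{col}(\polyA)\otimes\F_2^b$ and $\text{col}(I_a\otimes\polyB)=\F_2^a\otimes\text{col}(\polyB)$. Writing $\F_2^a=\text{col}(\polyA)\oplus U$ and $\F_2^b=\text{col}(\polyB)\oplus V$ for any complementary subspaces and decomposing $\F_2^a\otimes\F_2^b$ into the four resulting summands, the intersection is precisely $\text{col}(\polyA)\otimes\text{col}(\polyB)$, of dimension $r_a r_b$. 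Plugging in yields $\rank(H_X)=a r_b+b r_a - r_a r_b$.

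The analogous computation for $H_Z$ is identical once $\polyA,\polyB$ are replaced by their transposes and the two blocks are swapped, because $\rank(\polyA^T)=r_a$ and $\rank(\polyB^T)=r_b$; this delivers $\rank(H_X)=\rank(H_Z)$ with the stated value. The main obstacle is the intersection calculation: once one commits to viewing the column spaces inside the tensor product $\F_2^a\otimes\F_2^b$, the step is routine, but without that identification the appearance of $r_a r_b$ looks unmotivated. Everything else is formal.
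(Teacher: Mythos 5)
Your proof is correct, and for the rank formula it takes a genuinely different route from the paper. The weight argument is essentially the paper's: constant row weight of cyclic matrices times weight-one rows of the identity, summed over the two blocks. For the rank, you compute $\rank(H_X)$ directly as $\dim\bigl(\mathrm{col}(\polyA)\otimes\F_2^b+\F_2^a\otimes\mathrm{col}(\polyB)\bigr)$ and evaluate the inclusion--exclusion term by identifying the intersection with $\mathrm{col}(\polyA)\otimes\mathrm{col}(\polyB)$, then repeat for $H_Z$ using only $\rank(M^T)=\rank(M)$. The paper instead first argues $\rank(H_X)=\rank(H_Z)$ from the cyclic symmetry (the row spans of $\polyA$ and $\polyA^T$ agree up to relabeling of bits) and then back-solves from the already-quoted HGP formulas $n=2ab$ and $k=2(a-r_a)(b-r_b)$ via $\rank(H_X)=\rank(H_Z)=(n-k)/2$. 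Your version is more self-contained --- it does not import the Tillich--Z\'emor dimension formula --- and it exposes something the paper's argument hides: the rank balance and the value $ar_b+br_a-r_ar_b$ hold for \emph{any} HGP code with square seed matrices, with cyclicity playing no role in that part of the claim (it is needed only for the uniform-weight statement and, in the paper's route, for the symmetry shortcut). The only nuance worth flagging in both arguments is that the row weight is the number of \emph{distinct} monomials surviving reduction mod $x^a-1$ over $\F_2$, which the paper notes explicitly and you should too.
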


\begin{proof}
Recall, $\polyA$ and $\polyB$ take the form $Q^{\ell_1}+Q^{\ell_2}+...$.
Circulant matrices $Q^{\ell_1}$ and $Q^{\ell_2}$ are completely non-overlapping if $\ell_1\neq\ell_2$, and are identical otherwise.
Since each $Q^{\ell}$ is a permutation matrix, each row of $\polyA\otimes I$, $\polyA^T\otimes I$ (resp. $I\otimes \polyB$, $I\otimes \polyB^T$) has a number of non-zero entries equal to the number of distinct summands in $\polyA$ (resp. $\polyB$).
Further, cyclicality of $\polyA$ implies $\matspan(\row(\polyA))$ and $\matspan(\row(\polyA^T))$ are identical, up to reversal of bit labels (likewise for $\polyB$ and $\polyB^T$), which establishes $\rank(H_X) = \rank(H_Z)$.
Finally, since the block length ($n=2ab$) and logical dimension ($k=2(a-r_a)(b-r_b)$) of the \CxC are known, by symmetry we conclude that $\rank(H_X) = \rank(H_Z) = (n-k)/2$.
\end{proof}

We view the balance criteria of \Cref{prop:BalanceCriteria} as an important attribute of the \CxC code.
The balance criteria imbues the \CxC code with near-identical performance in both $X$ and $Z$ logical bases, something that is not generally guaranteed for an arbitrary HGP code.
In the following sub-sections, we showcase explicit \CxC instances we built, along with numerical estimates of their performance.
Then, in the next section, we argue that our \CxC codes admit convenient implementation under a {\em cyclic layout}, constructed via ideas we previously described in~\cite{Tham2025}.

\subsection{Explicit Constructions}\label{subsec:ExplicitConstructions}
In this work, we searched for \CxC codes by considering various choices of $a$, $b$, $\polyA$, and $\polyB$.
Conveniently, our search was aided by the form of seed matrices of the \CxC, which reduces the space of candidates considerably compared to searching among random binary matrices.

Through brute-force computer search, we exhaustively enumerated all classical cyclic codes ($\polyC$) of length $n_c \leq 40$, and with generating polynomials containing $w(\polyC)$ summands, with $2\leq w\leq 5$.
The limits on $w(\polyC)$ were chosen as such because $w(\polyC)=1$ cyclic matrices are always full rank and because we are interested in codes with low-weight checks (noting that $w(\polyC)=5$ already leads to HGP codes with check weight $10$ in the worst case).
The limit on classical block length of $n_c\leq40$ is a practical one, both because the exhaustive search quickly becomes expensive, and because that limit already encompasses resulting \CxC codes with quite large block lengths.

For each candidate classical cyclic code with $w(\polyC)=w$, we reject those with minimum distance $d_c<2$ or which contain $k_c<1$ logical qubits.
Of the remaining classical codes we retained those with higher code rate ($k_c/n_c$).
The results are lists $\tilde{C}_w=\{\polyC(d_c)~|~d_c=2,3,...\}$ of classical cyclic codes of various minimum distances, each with preferential code rates for the given $d_c$.
From these lists, we construct the following two sub-families of \CxC codes.

\begin{definition}[The \CTwo codes]
A symmetric cyclic HGP code is a cyclic HGP code where $a=b$, $x=y$, and $\polyA(x) = \polyB(y)$.
\end{definition}

The \CTwo codes (think C-square) are Cartesian products of a cyclic code $\polyC(d_c)\in\tilde{C}_w$ with itself.
Given an $[n_c,k_c,d_c]$ code from $\tilde{C}_w$, the resulting \CTwo codes have block length $n_{\CTwo}=2n_c^2$, code rate $k_{\CTwo}/n_{\CTwo} = (k_c/n_c)^2$, minimum distance $d_c$, and check weights $\omega=2w$.

\begin{definition} [The \CxR codes]
A repeated cyclic HGP code is a cyclic HGP code where $\polyB(y) = 1+y$.
\end{definition}

The \CxR codes are Cartesian products of a code $\polyC(d_c)\in\tilde{C}_w$ with a repetition code, the simplest example of a cyclic code.
An example of \CxR code was previously considered in~\cite{siegel2024towards}. Here, we thoroughly investigate the performance of this code family.
A nuance specific to \CxR codes is that we choose $b = d_c$; {\em i.e.} since the repetition code $\polyB(y)=1+y$ has parameters of the form $[[n_b=d_b, 1, d_b]]$, it must be chosen with $d_b=d_c$, so as not to drag down the resulting HGP code's minimum distance from what is afforded by the optimized code in $\tilde{C}_w$.
The resulting \CxR code then has block length $n_{\CxR}=2n_c d_c$, code rate $k_{\CxR}/n_{\CxR}=k_c/(n_c d_c)$, minimum distance $d_c$, and check weights $\omega=w+2$.

\begin{figure}
    \begin{centering}
    \includegraphics[width=8.5cm]{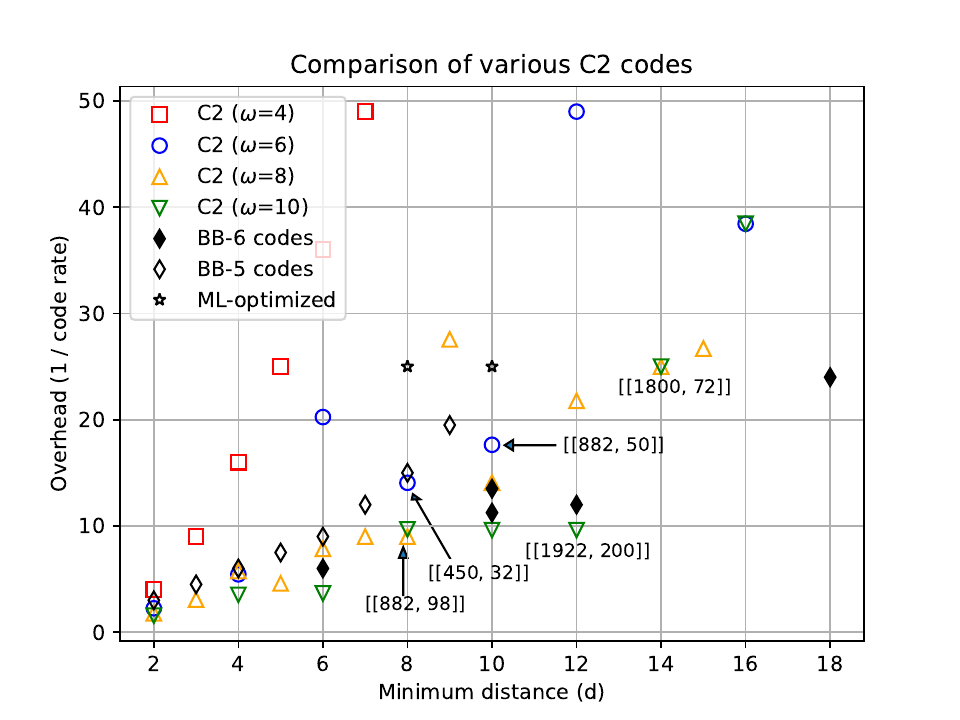}
    \par\end{centering}
    \caption{Comparison of C2 codes of various check weights ($\omega$). Also shown are BB codes (diamonds)~\cite{Bravyi2024,Ye2025} and the ML-optimized HGP codes of~\cite{Freire2025} (star).}
    \label{fig:sym-cHGP}
\end{figure}

\begin{figure}
    \begin{centering}
    \includegraphics[width=8.5cm]{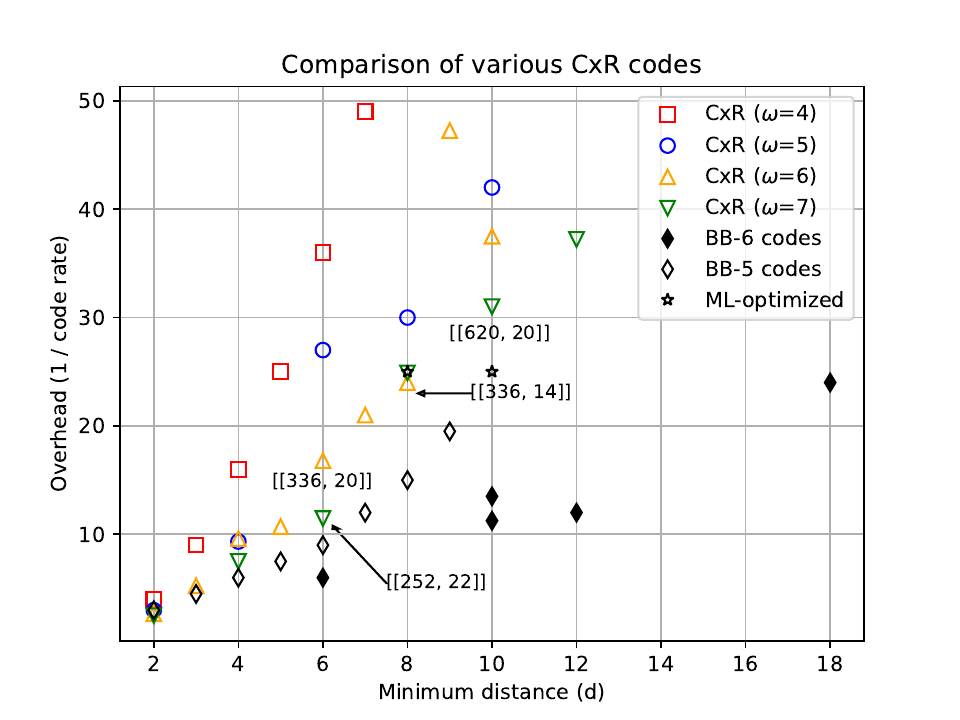}
    \par\end{centering}
    \caption{Comparison of \textsf{CxR} codes of various check weights ($\omega$). Also shown are BB codes (diamonds)~\cite{Bravyi2024,Ye2025} and the ML-optimized HGP codes of~\cite{Freire2025} (star).}
    \label{fig:rep-cHGP}
\end{figure}

\Cref{fig:sym-cHGP} shows a comparison of distance vs encoding overhead (inverse of code rate) for several \CTwo instances, stratified by check weights.
Also shown are BB-5 and BB-6 codes of \cite{Ye2025,Bravyi2024} which have check weights of five and six respectively, as well as the ML-optimized HGP codes of~\cite{Freire2025} optimized using random walks, simulated annealing and reinforcement learning, that have non-uniform check weights averaging $\approx 7$.
A similar comparison for \CxR codes is shown in \cref{fig:rep-cHGP}.

It is noteworthy that \CTwo codes exhibit minimum distances and code rates that are comparable to the BB codes, and significantly beating the ML-optimized HGP codes of \cite{Freire2025}.
This is especially true of \CTwo codes with higher check weights (though we note that higher $\omega$ may lead to worse circuit level performance).
The \CxR codes exhibit poorer parameters, which is to be expected -- most of the classical codes in $\tilde{C}_w$ are such that $n_c/k_c < d_c$, so for a given $\polyC(d_c)\in\tilde{C}_w$ the associated \CTwo instance will usually have much better code rates than its \CxR counterpart.
However, we were interested in \CxR as a sub-family primarily because the simplicity of its construction allows for an especially efficient layout.
Still, our \CxR instances have parameters that are competitive with the ML-optimized codes; and since they generally have checks of lower weight, they offer strong circuit level performance.

\subsection{Performance}\label{subsec:Performance}

To evaluate circuit level performance, we simulated a selection of \CxC codes under standard circuit noise -- wherein two-qubit controlled-Pauli and single-qubit Clifford gates, measurements, resets, and idle qubits all are followed by one- or two-qubit depolarizing channels (whichever is appropriate for the operation) with noise rate $p$.
Operations supported on non-overlapping qubits may occur simultaneously.
The precise circuit we use is described in \cref{sec:Layout}.
Our simulations were implemented in Stim~\cite{Gidney2021stim} and decoded using BP+OSD~\cite{panteleev2021degenerate, RoffeBPOSD2020,RoffeStimBPOSD2022}, instantiated for 10,000 BP iterations and 5-th order OSD.
The logical error rates $\tilde{p}_{\log}$ we report here are normalized by number of syndrome rounds.

In \cref{fig:LogicalComparison} we show a comparison of several \CTwo and \CxR codes against BB-6 and an ML-optimized HGP code.
In order to compare codes with different encoding rates, we plot on the horizontal axis the logical error rate normalized by both number of syndrome rounds and number of encoded logical qubits ({\em i.e.} $\tilde{p}_{\log} / k$).
We used noise rates $p=10^{-3}$ and $3\times10^{-3}$ (shown in \cref{fig:LogicalComparison} as solid vs hollow symbols respectively).

\begin{figure}
    \begin{centering}
    \includegraphics[width=8.5cm]{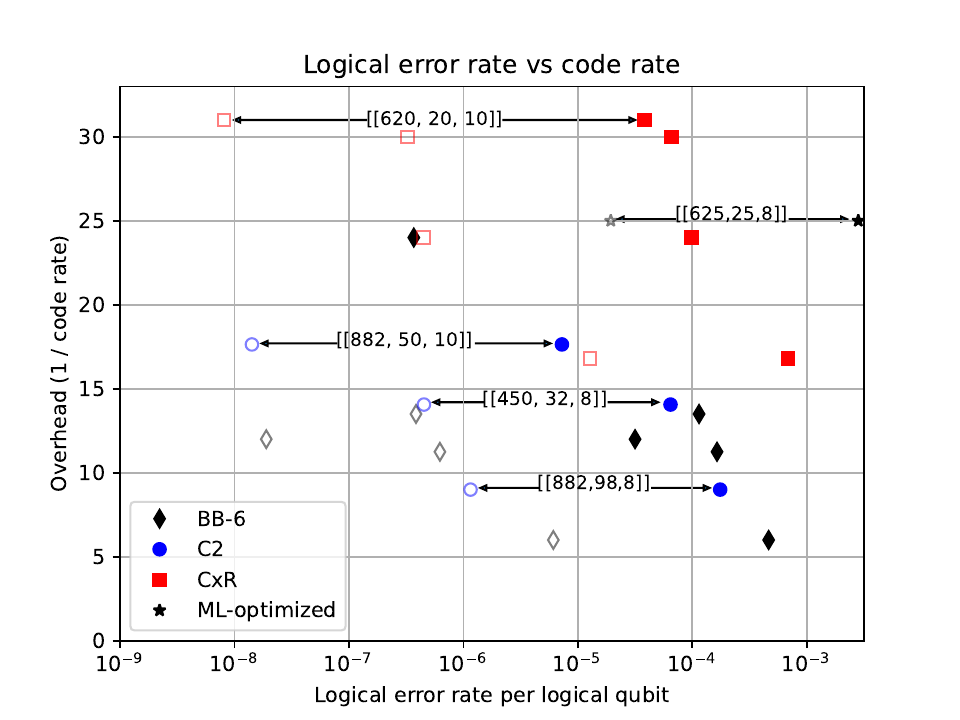}
    \par\end{centering}
    \caption{Circuit level performance versus encoding overhead, under standard circuit noise with rates $p=3\times 10^{-3}$ (solid) and $p=10^{-3}$ (hollow).
    A subset of \CTwo (circles) and \CxR (square) codes are compared against BB-6 (diamond) and ML-optimized HGP (star) codes.}
    \label{fig:LogicalComparison}
\end{figure}

\Cref{fig:sym_circuitperf,fig:rep_circuitperf} similarly show circuit level performance for the same selection of \CTwo and \CxR codes, again under standard circuit noise, but across a broader range of physical noise rate $p$.
Therein, lines represent the logical error rate heuristic $\tilde{p}_{\log} = p^{d/2} e^{\alpha +\beta p + \gamma p^2}$ for real valued parameters $\alpha,\beta,\gamma$ fitted to the numerical data points.
The fitted values for those parameters are listed in Appendix~\ref{appendix:FitParams}.
Also shown for comparison in dashed lines are heuristics for surface codes, $p_{\log}=0.1 (100p)^{(d+1)/2}$, of comparable distance.

In the previous section, we observe that the \CTwo codes achieve better parameters than the ML-optimized HGP codes. This advantage persists when considering the logical error rate. \cref{fig:LogicalComparison} shows that \CTwo codes achieves a logical error rate several orders of magnitude lower than the logical error rate of the ML-optimized HGP codes.
The \CxR codes, while they have worse encoding rates than the ML-optimized HGP codes, nevertheless still boast better logical error rate.

Moreover, the \CTwo codes we found are competitive compared to BB-6 codes in some regimes.
For some target logical error rate, we observe that \CTwo codes achieve that target while having a smaller overhead than BB-6 codes.
Further, in some regimes, \CTwo codes can simultaneously reach a lower logical error rate per qubit and a smaller overhead than BB-6 codes.

\begin{figure}
    \begin{centering}
    \includegraphics[width=8.5cm]{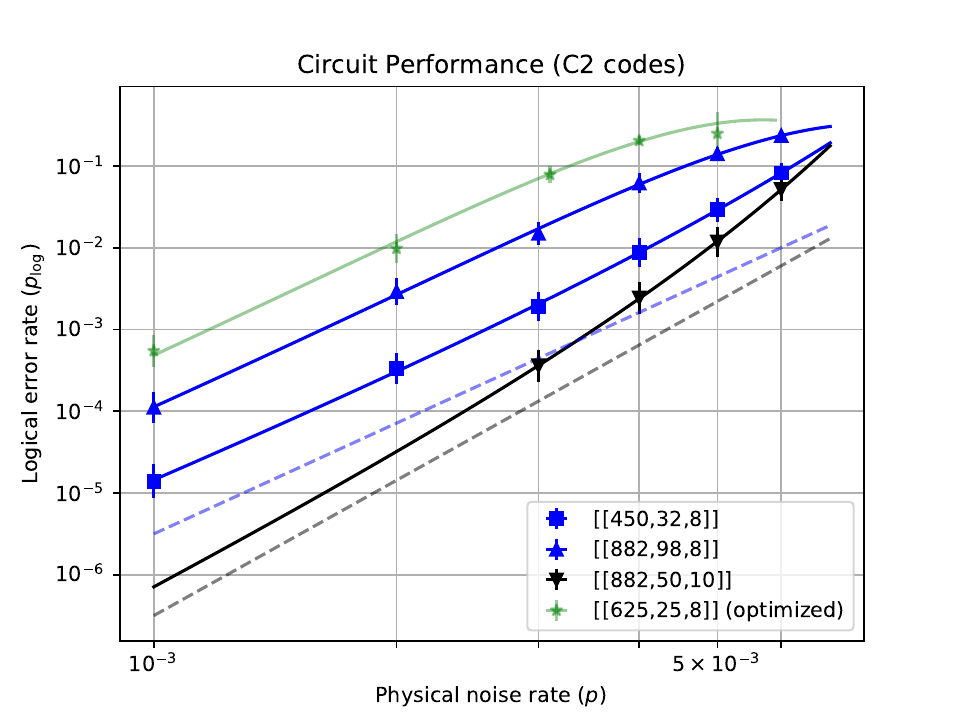}
    \par\end{centering}
    \caption{Circuit level performance for a selection of \CTwo codes, under standard circuit noise.
    Also shown is the $[[625,25,8]]$ ML-optimized HGP code of \cite{Freire2025} (green, star). Dashed lines are surface code heuristics, for comparison.}
    \label{fig:sym_circuitperf}
\end{figure}

\begin{figure}
    \begin{centering}
    \includegraphics[width=8.5cm]{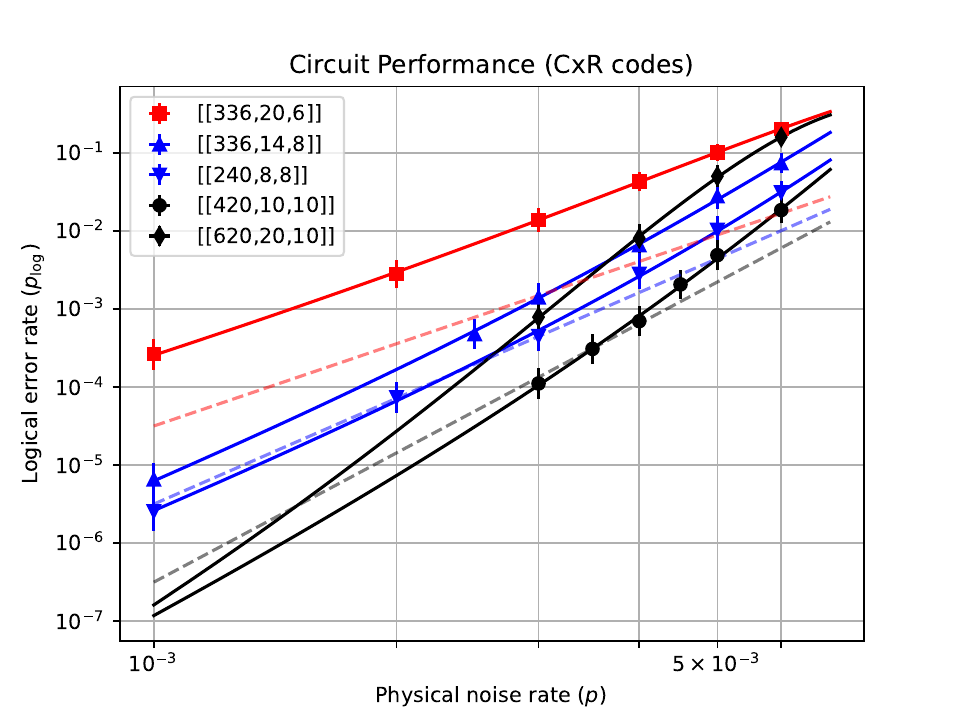}
    \par\end{centering}
    \caption{Circuit level performance for a selection of \CxR codes, under standard circuit noise. Dashed lines are surface code heuristics, for comparison.}
    \label{fig:rep_circuitperf}
\end{figure}

\section{Layout}
\label{sec:Layout}
Our consideration of the sub-family of \CxR codes is strongly motivated by the possibility of an ultra efficient modular layout.
A similar layout for \CxR codes, where the data qubits are moving instead of the ancilla qubits, is considered in~\cite{siegel2024towards}.
We follow the layout strategy introduced in \cite{Tham2025}, which we will not repeat here, except to state that it considers flying qubits that can move relative to each other as a {\em modus operandi} for scaling, and operations between otherwise distant qubits (or sets thereof) are only possible if they are moved and brought into alignment with each other.
Such a model is a realistic abstraction of many quantum hardware platforms, including trapped ions that can be re-organized through ion shuttling.
Effectively, the model segments the many gates of a quantum LDPC code's syndrome circuit into a hierarchy of slow operations like long-range qubit transport (which we work to minimize) vs fast ones like gate or short-range transport operations (of which there can be many).
As we shall see, the layout we design has a cost that directly depends on the number of summands in polynomials $\polyA$ and $\polyB$, and thus favors simple codes like \CxR codes.

To describe such a scalable layout let's denote by $\q_{\uu}$ and $\anc_{\ve}$ the qubits and ancillae of a \CxC code.
Let the qubits be indexed by tuples such that $\uu=(i,j,k)\in \Z_2\times \Z_a\times \Z_b$.
Then, the $u$-th column of the \CxC parity check matrices is related to the qubit $\q_{\uu}$ via the map $u = abi + bj + k$.
Similarly, let ancillae $\anc_{\ve}$ be indexed by tuples such that $\ve=(r,s,t)\in \{X,Z\} \times  \Z_a\times \Z_b$.
Then, the ancilla $\anc_{\ve}$ measures the $v$-th row of parity check matrix $H_r$, with $v=bs+t$.
For convenience, we shall directly refer to rows and columns of \CxC parity check matrices as being indexed by the 2-tuple $(s,t)$ and 3-tuple $(i,j,k)$ respectively.

For the layout of our \CxC, we contemplate an arrangement of qubits and ancillae in two rows -- for the present discussion, let the {\em upper row} contain qubits and {\em lower row} contain ancillae.
Furthermore, let qubits in the upper row be arranged in ascending order of $u$ and let ancillae in the lower row be arranged with $X$ ancillae first followed by $Z$ ancillae with each set in turn being in ascending order of $v$ (with $u$ and $v$ defined above).

Now let the lower row be imbued with a {\em cyclic shift} operation, parameterized by the 3-tuple $(\chi,\eta,\zeta)$.
We define a $(\chi,\eta,\zeta)$ cyclic shift as the permutation of indices in $\ve$ (modulo $2$, $a$, and $b$ respectively) that realizes the following alignment: $\anc_{(X,*,*)}\leftrightarrow\q_{(\chi,*\oplus\eta,*\oplus\zeta)}$, and $\anc_{(Z,*\oplus\eta,*\oplus\zeta)}\leftrightarrow\q_{(1\oplus\chi, *,*)}$.
(Here $\oplus$ denotes modular addition, with the obvious modulus by context, and $*$ represents the range of all possible values for the given index.)
Physically, such a cyclic shift is achievable by cyclicly translating ancillae in the lower row, over distances $ab\chi$, $b\eta$, and $\zeta$, with periods $2ab$, $ab$, and $b$ respectively; see~\cite{Tham2025} for a discussion on the physical implementation of the cyclic shift using ions.

We note, that the qubit/ancilla alignment implied by a $(\chi,\eta,\zeta)$ cyclic shift requires physical cyclic translation of the $r=X$ vs $r=Z$ ancillae in opposite chiral senses -- {\em i.e.} a ``clockwise'' cyclic translation of $(X,*,*)$ ancillae must be accompanied by a corresponding ``anti-clockwise'' one on $(Z,*,*)$ ancillae.
Further, in practice an implementation of the cyclic shift may choose to decompose it into multiple steps -- for instance by first applying $(\chi,0,0)$ followed afterward by the $(\chi,\eta,\zeta)$ cyclic shifts -- out of Physics or engineering considerations.
In this example, the intermediate $(\chi,0,0)$ cyclic shift implies longer transport distances given our qubit arrangement, and may therefore necessitate special cooling steps.

\begin{algorithm}
\DontPrintSemicolon
\SetAlgoLined

\KwIn{A \CxC code and an integer $d$.\;}
\KwOut{A circuit measuring $d$ rounds of all stabilizer generators of the input code.\;}
Prepare all ancillae in the state $\ket +$.\;

\For{$0\leq l \leq d$}{
    \For{each monomial $x^{\eta} \in \polyA(x)$}{
        Do $(0,\eta,0)$ cyclic shift.\;
        Apply CX gates between ancillae $(X,*,*)$ and qubits $(0,*\oplus \eta,*)$, if $l<d$.\;
        Apply CZ gates between ancillae $(Z,*\oplus \eta,*)$ and qubits $(1,*,*)$, if $l>0$.\;
    }
    Measure and reset ancillae $(Z,*,*)$ in $X$ basis, if $l>0$.\;

    \If{$l<d$}{
    \For{each monomial $y^{\zeta} \in \polyB(y)$}{
        Do $(1,0,\zeta)$ cyclic shift.\;
        Apply CX gates between ancillae $(X,*,*)$ and qubits $(1,*,*\oplus \zeta)$.\;
        Apply CZ gates between ancillae $(Z,*,*\oplus \zeta)$ and qubits $(0,*,*)$.\;
    }
    Measure and reset ancillae $(X,*,*)$ in $X$ basis.\;
    }
}

\caption{\CxC codes sparse cyclic layout}
\label{algorithm:sparse_cyclic_layout}
\end{algorithm}

\begin{proposition}\label{prop:SparseCyclicLayout}
    \Cref{algorithm:sparse_cyclic_layout} implements a syndrome extraction circuit for a \CxC code, with two rows of qubit modules imbued with cyclic shifts.
    The resulting circuit has depth $(2w(\polyA)+2w(\polyB)+2)d + (2w(\polyA)+1)$.
\end{proposition}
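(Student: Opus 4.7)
The proposition has two parts — correctness (the algorithm implements $d$ rounds of syndrome extraction for every stabilizer generator of the \CxC code) and the depth count — and I would address them in sequence.

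For correctness, the first step is to spell out the support of row $(s,t)$ of $H_X$ and of $H_Z$ in the qubit indexing $(i,j,k)$. Writing $\polyA(x)=\sum_\eta x^\eta$ and $\polyB(y)=\sum_\zeta y^\zeta$, row $(s,t)$ of $H_X$ is supported on $\{(0,s\oplus\eta,t):\eta\in\polyA\}\cup\{(1,s,t\oplus\zeta):\zeta\in\polyB\}$, while, because of the transposes in $H_Z$, row $(s,t)$ of $H_Z$ is supported on $\{(0,s,t-\zeta):\zeta\in\polyB\}\cup\{(1,s-\eta,t):\eta\in\polyA\}$ (all offsets taken mod $a$ or mod $b$ as appropriate). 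Using the definition of the $(\chi,\eta,\zeta)$ cyclic shift from the preceding discussion, I would then verify that after a $(0,\eta,0)$ shift every ancilla $\anc_{(X,s,t)}$ is aligned with qubit $\q_{(0,s\oplus\eta,t)}$ (matching the algorithm's CX step) while simultaneously every ancilla $\anc_{(Z,s,t)}$ is aligned with qubit $\q_{(1,s-\eta,t)}$ (matching the CZ step, which the algorithm phrases as coupling $\anc_{(Z,*\oplus\eta,*)}$ to $\q_{(1,*,*)}$). The second sub-loop, with shift $(1,0,\zeta)$, does the analogous job for the $I_a\otimes\polyB$ part of $H_X$ and the $I_a\otimes\polyB^T$ part of $H_Z$. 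Summing over all monomials of $\polyA$ and $\polyB$, each $X$-ancilla accumulates one CX with every qubit in the support of its $H_X$ row, and each $Z$-ancilla accumulates one CZ with every qubit in the support of its $H_Z$ row; together with the initial $\ket{+}$ preparation and the subsequent $X$-basis measurement, this is the standard indirect measurement of an $X$- or $Z$-type Pauli stabilizer, so each ancilla read-out yields the correct syndrome bit.

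I would then match the loop schedule to the syndrome rounds. A $Z$-round is completed at each $l=1,\dots,d$, chaining the CZ gates from the tail of iteration $l-1$'s second sub-loop with those at the head of iteration $l$'s first sub-loop; an $X$-round is completed at each $l=0,\dots,d-1$, combining the CX gates from both sub-loops of iteration $l$. The boundary conditions $l<d$ (suppressing CX and $X$-measurement) and $l>0$ (suppressing CZ and $Z$-measurement) are exactly what is needed to truncate the half-rounds at the two endpoints of the $l$-loop, giving $d$ complete rounds of both $X$- and $Z$-syndromes.

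For the depth, the key observation is that within each monomial iteration of either sub-loop the CX and CZ gate layers can be executed in parallel: they act on disjoint ancillae ($X$ vs $Z$) and on disjoint data-qubit layers ($i=\chi$ vs $i=1\oplus\chi$). Hence each monomial iteration contributes only two time steps (one cyclic shift and one combined gate layer). Accounting for the one-step initial preparation, the $2w(\polyA)$ steps of the first sub-loop and optional $Z$-measurement, the $2w(\polyB)$ steps of the second sub-loop and optional $X$-measurement at each $l$, and handling $l=0$ and $l=d$ separately (where one of the two sub-loops or one of the two measurements is skipped), the total telescopes to $(2w(\polyA)+2w(\polyB)+2)d+(2w(\polyA)+1)$. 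The main obstacle in the whole proof is the index bookkeeping: verifying that the modular subtraction offsets introduced by $\polyA^T$ and $\polyB^T$ in $H_Z$ align exactly with the offsets induced on the $Z$-ancillae by the cyclic shifts, so that each CZ in the algorithm indeed lands on the correct qubit of the corresponding $H_Z$ row.
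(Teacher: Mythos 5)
Your proof is correct and follows essentially the same route as the paper's: identify which qubit each monomial of $\polyA$, $\polyA^T$, $\polyB$, $\polyB^T$ connects to a given ancilla, check that the $(0,\eta,0)$ and $(1,0,\zeta)$ cyclic shifts realize exactly those alignments (Lines~5--6 and 11--12), and count depth by observing that the CX and CZ layers within each monomial iteration act on disjoint ancillae and disjoint data-qubit blocks and hence run concurrently. The only differences are cosmetic: you derive the monomial-support correspondence directly where the paper cites Lemma~1 of \cite{Tham2025}, and you make explicit the interleaving of consecutive $X$- and $Z$-rounds and the role of the $l<d$ and $l>0$ guards, which the paper relegates to the figure in the appendix.
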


In this proposition, $d$ is the number of rounds of syndrome extraction executed. Therefore, the amortized depth per round of syndrome extraction converges to $(2w(\polyA)+2w(\polyB)+2)$ including
$w(\polyA)+w(\polyB)$ rounds of two-qubit gates, $w(\polyA)+w(\polyB)$ rounds of cyclic shifts, and $2$ rounds of preparation and measurement of ancilla qubits.

\begin{proof}
    Following Lemma~1 in~\cite{Tham2025}, the matrix corresponding to $x^{\ell} y^m$ (a monomial summand of $\polyA$ or $\polyB$) has a non-zero entry on row $(s,t)$ and column $(i,j,k)$ if and only if $(j,k)=(s\oplus\ell,t\oplus m)$, and $i$ takes the value appropriate for whether $x^{\ell}y^m$ appears in $\polyA$ or $\polyB^T$ ($i=0$), or $\polyB$ or $\polyA^T$ ($i=1$).
    
    For a \CxC code, $\polyA(x)$ (resp. $\polyA(x)^T$) contains only terms of the form $x^{\ell}$ (resp. $x^{-\ell}$), so they connect $\q_{(0,*\oplus\ell,*)}\leftrightarrow\anc_{(X,*,*)}$ (resp. $\q_{(1,*,*)}\leftrightarrow\anc_{(Z,*\oplus\ell,*)}$).
    These gates are reflected in Lines~5-6 of \cref{algorithm:sparse_cyclic_layout}.
    Similarly, $\polyB(y)$ (resp. $\polyB(y)^T$) for a \CxC code contains terms of the form $y^m$ (resp. $y^{-m}$), and connect $\q_{(1,*,*\oplus m)}\leftrightarrow\anc_{(X,*,*)}$ (resp. $\q_{(0,*,*)}\leftrightarrow\anc_{(Z,*,*\oplus m)}$).
    The respective gates are reflected in Lines~11-12.

    All gates of Lines~5-6 are non-overlapping and can execute concurrently; the {\em for loop} of Line~3 thus contributes $2w(A)$ to circuit depth.
    Similarly, the gates of Lines~11-12 are non-overlapping and the {\em for loop} of Line~9 thus contributes $2w(B)$ to circuit depth.
    Adding one layer each of measurement/reset in Lines~7~and~13 (but excluding Lines~3-7 in iteration $l=0$) brings the circuit depth to $(2w(A)+2w(B)+2)d$.
    Finally, the ancilla preparation of Line~1 and gates of Lines~4-6 in iteration $l=0$ add $2w(A)+1$ of depth.
    For a visualization of the temporal sequence of operations, we refer the reader to \cref{fig:PackedCircuit} in the Appendix.
\end{proof}

\begin{remark}\label{remark:NonModular}
    While \cref{algorithm:sparse_cyclic_layout} is written with a particular hardware layout in mind, simply removing the cyclic shifts of Lines~4~and~10 yields a circuit suitable for an idealized monolithic device wherein {\em any} set of operations (including two-qubit gates) with non-overlapping support is allowed to occur concurrently.
    Moreover, the resulting circuit is {\em maximally packed}; meaning, it leaves no idle qubits at all between layers of gates.
    Accounting for depth in the same manner as in \cref{prop:SparseCyclicLayout}, the non-modular circuit has depth $(w(\polyA)+w(\polyB)+2)d + (w(\polyA)+1)$.
    Indeed, we used just such a syndrome extraction circuit for simulations under standard circuit noise in \cref{subsec:Performance}.
\end{remark}

\section{Conclusion}
In this work, we proposed a simple construction of hypergraph product codes based on classical cyclic codes.
While examples of cyclic HGP codes were previously considered in the literature ({\em e.g.} \cite{panteleev2021degenerate} introduces examples of \CTwo codes and \cite{siegel2024towards} discusses \CxR codes), this work to our knowledge is the first broad survey of cyclic HGP codes and led to the discovery of high performance instances.

Despite the simplicity of the construction, we were able to find cyclic HGP instances that exhibit code rates, minimum distances, and circuit level performance that are competitive with both BB codes and significantly better than previous state-of-the-art ML-optimized HGP codes.

We also proposed a cyclic layout alongside an even simpler cyclic HGP sub-family of codes, formed by the product with a repetition code.
This simpler sub-family boasts good circuit level performance, as well as being very hardware efficient under our cyclic layout -- an important attribute for real-world implementation.

Our results, combined with the flexible attributes of hypergraph product codes generally, suggest that it may yet remain a relevant candidate for long-term fault-tolerance.

This work focuses on the design of a code or a fault-tolerant quantum memory. We leave the design and optimization of logical gates for these codes for future work. One can immediately leverage previous constructions of logical gates for these codes~\cite{krishna2021fault, quintavalle2023partitioning, tiew2025low, xu2025fast, patra2025targeted, hong2025single, siegel2025snakes}.

\section{Acknowledgment}
The authors thank Pavel Panteleev, Ryan Tiew, Aharon Brodutch, Min Ye, Felix Tripier, John Gamble and the whole IonQ team for useful and insightful discussions.
 
\bibliography{references.bib}

\begin{thebibliography}{52}
\providecommand{\natexlab}[1]{#1}
\providecommand{\url}[1]{\texttt{#1}}
\expandafter\ifx\csname urlstyle\endcsname\relax
  \providecommand{\doi}[1]{doi: #1}\else
  \providecommand{\doi}{doi: \begingroup \urlstyle{rm}\Url}\fi

\bibitem[Berthusen and Gottesman(2024)]{berthusen2024partial}
Noah Berthusen and Daniel Gottesman.
\newblock Partial syndrome measurement for hypergraph product codes.
\newblock \emph{Quantum}, 8:\penalty0 1345, 2024.

\bibitem[Bluvstein et~al.(2024)Bluvstein, Evered, Geim, Li, Zhou, Manovitz, Ebadi, Cain, Kalinowski, Hangleiter, et~al.]{bluvstein2024logical}
Dolev Bluvstein, Simon~J Evered, Alexandra~A Geim, Sophie~H Li, Hengyun Zhou, Tom Manovitz, Sepehr Ebadi, Madelyn Cain, Marcin Kalinowski, Dominik Hangleiter, et~al.
\newblock Logical quantum processor based on reconfigurable atom arrays.
\newblock \emph{Nature}, 626\penalty0 (7997):\penalty0 58--65, 2024.

\bibitem[Bravyi et~al.(2024)Bravyi, Cross, Gambetta, Maslov, Rall, and Yoder]{Bravyi2024}
Sergey Bravyi, Andrew~W. Cross, Jay~M. Gambetta, Dmitri Maslov, Patrick Rall, and Theodore~J. Yoder.
\newblock High-threshold and low-overhead fault-tolerant quantum memory.
\newblock \emph{Nature}, 627\penalty0 (8005):\penalty0 778--782, March 2024.
\newblock ISSN 1476-4687.
\newblock URL \url{http://dx.doi.org/10.1038/s41586-024-07107-7}.

\bibitem[Breuckmann and Eberhardt(2021)]{breuckmann2021quantum}
Nikolas~P Breuckmann and Jens~Niklas Eberhardt.
\newblock Quantum low-density parity-check codes.
\newblock \emph{PRX quantum}, 2\penalty0 (4):\penalty0 040101, 2021.

\bibitem[Calderbank and Shor(1996)]{calderbank1996good}
A~Robert Calderbank and Peter~W Shor.
\newblock Good quantum error-correcting codes exist.
\newblock \emph{Physical Review A}, 54\penalty0 (2):\penalty0 1098, 1996.

\bibitem[Cirac and Zoller(1995)]{cirac1995quantum}
Juan~I Cirac and Peter Zoller.
\newblock Quantum computations with cold trapped ions.
\newblock \emph{Physical review letters}, 74\penalty0 (20):\penalty0 4091, 1995.

\bibitem[Connolly et~al.(2024)Connolly, Londe, Leverrier, and Delfosse]{connolly2024fast}
Nicholas Connolly, Vivien Londe, Anthony Leverrier, and Nicolas Delfosse.
\newblock Fast erasure decoder for hypergraph product codes.
\newblock \emph{Quantum}, 8:\penalty0 1450, 2024.

\bibitem[{D. J. C. MacKay} et~al.(2004){D. J. C. MacKay}, {G. Mitchison}, and {P. L. McFadden}]{Mackay2004}
{D. J. C. MacKay}, {G. Mitchison}, and {P. L. McFadden}.
\newblock Sparse-graph codes for quantum error correction.
\newblock \emph{IEEE Transactions on Information Theory}, 50\penalty0 (10):\penalty0 2315--2330, October 2004.
\newblock ISSN 1557-9654.
\newblock URL \url{https://doi.org/10.1109/TIT.2004.834737}.

\bibitem[Delfosse et~al.(2021)Delfosse, Beverland, and Tremblay]{delfosse2021bounds}
Nicolas Delfosse, Michael~E Beverland, and Maxime~A Tremblay.
\newblock Bounds on stabilizer measurement circuits and obstructions to local implementations of quantum ldpc codes.
\newblock \emph{arXiv preprint arXiv:2109.14599}, 2021.

\bibitem[Dennis et~al.(2002)Dennis, Kitaev, Landahl, and Preskill]{dennis2002topological}
Eric Dennis, Alexei Kitaev, Andrew Landahl, and John Preskill.
\newblock Topological quantum memory.
\newblock \emph{Journal of Mathematical Physics}, 43\penalty0 (9):\penalty0 4452--4505, 2002.

\bibitem[Fowler et~al.(2012)Fowler, Mariantoni, Martinis, and Cleland]{fowler2012surface}
Austin~G Fowler, Matteo Mariantoni, John~M Martinis, and Andrew~N Cleland.
\newblock Surface codes: Towards practical large-scale quantum computation.
\newblock \emph{Physical Review A—Atomic, Molecular, and Optical Physics}, 86\penalty0 (3):\penalty0 032324, 2012.

\bibitem[Freire et~al.(2025)Freire, Delfosse, and Leverrier]{Freire2025}
Bruno C.~A. Freire, Nicolas Delfosse, and Anthony Leverrier.
\newblock Optimizing hypergraph product codes with random walks, simulated annealing and reinforcement learning.
\newblock \emph{arXiv:2501.09622}, 2025.
\newblock URL \url{https://doi.org/10.48550/arXiv.2501.09622}.

\bibitem[Gallager(2003)]{gallager2003low}
Robert Gallager.
\newblock Low-density parity-check codes.
\newblock \emph{IRE Transactions on information theory}, 8\penalty0 (1):\penalty0 21--28, 2003.

\bibitem[Gidney(2021)]{Gidney2021stim}
Craig Gidney.
\newblock Stim: a fast stabilizer circuit simulator.
\newblock \emph{{Quantum}}, 5:\penalty0 497, July 2021.
\newblock ISSN 2521-327X.
\newblock \doi{10.22331/q-2021-07-06-497}.
\newblock URL \url{https://doi.org/10.22331/q-2021-07-06-497}.

\bibitem[Gottesman(1997)]{gottesman1997stabilizer}
Daniel Gottesman.
\newblock \emph{Stabilizer codes and quantum error correction}.
\newblock California Institute of Technology, 1997.

\bibitem[Grospellier et~al.(2021)Grospellier, Grou{\`e}s, Krishna, and Leverrier]{grospellier2021combining}
Antoine Grospellier, Lucien Grou{\`e}s, Anirudh Krishna, and Anthony Leverrier.
\newblock Combining hard and soft decoders for hypergraph product codes.
\newblock \emph{Quantum}, 5:\penalty0 432, 2021.

\bibitem[Heya et~al.(2025)Heya, Phung, Malekakhlagh, Steiner, Turchetti, Shanks, Mamin, Lu, Kandel, Sundaresan, et~al.]{heya2025randomized}
Kentaro Heya, Timothy Phung, Moein Malekakhlagh, Rachel Steiner, Marco Turchetti, William Shanks, John Mamin, Wen-Sen Lu, Yadav~Prasad Kandel, Neereja Sundaresan, et~al.
\newblock Randomized benchmarking of a high-fidelity remote cnot gate over a meter-scale microwave interconnect.
\newblock \emph{arXiv preprint arXiv:2502.15034}, 2025.

\bibitem[Hong(2025)]{hong2025single}
Yifan Hong.
\newblock Single-shot preparation of hypergraph product codes via dimension jump.
\newblock \emph{Quantum}, 9:\penalty0 1879, 2025.

\bibitem[Hu et~al.(2001)Hu, Eleftheriou, and Arnold]{hu2001progressive}
Xiao-Yu Hu, Evangelos Eleftheriou, and D-M Arnold.
\newblock Progressive edge-growth tanner graphs.
\newblock In \emph{GLOBECOM'01. IEEE Global Telecommunications Conference (Cat. No. 01CH37270)}, volume~2, pages 995--1001. IEEE, 2001.

\bibitem[{Jean-Pierre Tillich} and {Gilles Zémor}(2014)]{Tillich2014}
{Jean-Pierre Tillich} and {Gilles Zémor}.
\newblock Quantum {LDPC} {Codes} {With} {Positive} {Rate} and {Minimum} {Distance} {Proportional} to the {Square} {Root} of the {Blocklength}.
\newblock \emph{IEEE Transactions on Information Theory}, 60\penalty0 (2):\penalty0 1193--1202, February 2014.
\newblock ISSN 1557-9654.
\newblock URL \url{https://doi.org/10.1109/TIT.2013.2292061}.

\bibitem[Kielpinski et~al.(2002)Kielpinski, Monroe, and Wineland]{kielpinski2002architecture}
David Kielpinski, Chris Monroe, and David~J Wineland.
\newblock Architecture for a large-scale ion-trap quantum computer.
\newblock \emph{Nature}, 417\penalty0 (6890):\penalty0 709--711, 2002.

\bibitem[Kitaev(1997)]{kitaev1997quantum}
A~Yu Kitaev.
\newblock Quantum computations: algorithms and error correction.
\newblock \emph{Russian Mathematical Surveys}, 52\penalty0 (6):\penalty0 1191, 1997.

\bibitem[Knill et~al.(2001)Knill, Laflamme, and Milburn]{knill2001scheme}
Emanuel Knill, Raymond Laflamme, and Gerald~J Milburn.
\newblock A scheme for efficient quantum computation with linear optics.
\newblock \emph{nature}, 409\penalty0 (6816):\penalty0 46--52, 2001.

\bibitem[Kovalev and Pryadko(2013)]{Kovalev2013}
Alexey~A. Kovalev and Leonid~P. Pryadko.
\newblock Quantum {Kronecker} sum-product low-density parity-check codes with finite rate.
\newblock \emph{Physical Review A}, 88\penalty0 (1):\penalty0 012311, July 2013.
\newblock URL \url{https://doi.org/10.1103/PhysRevA.88.012311}.

\bibitem[Krishna and Poulin(2021)]{krishna2021fault}
Anirudh Krishna and David Poulin.
\newblock Fault-tolerant gates on hypergraph product codes.
\newblock \emph{Physical Review X}, 11\penalty0 (1):\penalty0 011023, 2021.

\bibitem[Loss and DiVincenzo(1998)]{loss1998quantum}
Daniel Loss and David~P DiVincenzo.
\newblock Quantum computation with quantum dots.
\newblock \emph{Physical Review A}, 57\penalty0 (1):\penalty0 120, 1998.

\bibitem[Lyon(2006)]{lyon2006spin}
SA~Lyon.
\newblock Spin-based quantum computing using electrons on liquid helium.
\newblock \emph{Physical Review A—Atomic, Molecular, and Optical Physics}, 74\penalty0 (5):\penalty0 052338, 2006.

\bibitem[MacKay(2003)]{mackay2003information}
David~JC MacKay.
\newblock \emph{Information theory, inference and learning algorithms}.
\newblock Cambridge university press, 2003.

\bibitem[MacKay et~al.(2004)MacKay, Mitchison, and McFadden]{mackay2004sparse}
David~JC MacKay, Graeme Mitchison, and Paul~L McFadden.
\newblock Sparse-graph codes for quantum error correction.
\newblock \emph{IEEE Transactions on Information Theory}, 50\penalty0 (10):\penalty0 2315--2330, 2004.

\bibitem[Manes and Claes(2025)]{manes2025distance}
Argyris~Giannisis Manes and Jahan Claes.
\newblock Distance-preserving stabilizer measurements in hypergraph product codes.
\newblock \emph{Quantum}, 9:\penalty0 1618, 2025.

\bibitem[Marxer et~al.(2023)Marxer, Veps{\"a}l{\"a}inen, Jolin, Tuorila, Landra, Ockeloen-Korppi, Liu, Ahonen, Auer, Belzane, et~al.]{marxer2023long}
Fabian Marxer, Antti Veps{\"a}l{\"a}inen, Shan~W Jolin, Jani Tuorila, Alessandro Landra, Caspar Ockeloen-Korppi, Wei Liu, Olli Ahonen, Adrian Auer, Lucien Belzane, et~al.
\newblock Long-distance transmon coupler with cz-gate fidelity above 99.8\%.
\newblock \emph{PRX Quantum}, 4\penalty0 (1):\penalty0 010314, 2023.

\bibitem[Morris et~al.(2023)Morris, Pllaha, and Kelley]{morris2023absorbing}
Kirsten~D Morris, Tefjol Pllaha, and Christine~A Kelley.
\newblock Absorbing sets in quantum ldpc codes.
\newblock \emph{arXiv preprint arXiv:2307.14532}, 2023.

\bibitem[Panteleev and Kalachev(2021)]{panteleev2021degenerate}
Pavel Panteleev and Gleb Kalachev.
\newblock Degenerate quantum ldpc codes with good finite length performance.
\newblock \emph{Quantum}, 5:\penalty0 585, 2021.

\bibitem[Patra and Barg(2025)]{patra2025targeted}
Adway Patra and Alexander Barg.
\newblock Targeted clifford logical gates for hypergraph product codes.
\newblock \emph{Quantum}, 9:\penalty0 1842, 2025.

\bibitem[Pecorari et~al.(2025{\natexlab{a}})Pecorari, Jandura, Brennen, and Pupillo]{Pecorari2025}
Laura Pecorari, Sven Jandura, Gavin~K. Brennen, and Guido Pupillo.
\newblock High-rate quantum {LDPC} codes for long-range-connected neutral atom registers.
\newblock \emph{Nature Communications}, 16\penalty0 (1):\penalty0 1111, January 2025{\natexlab{a}}.
\newblock ISSN 2041-1723.
\newblock URL \url{https://doi.org/10.1038/s41467-025-56255-5}.

\bibitem[Pecorari et~al.(2025{\natexlab{b}})Pecorari, Jandura, Brennen, and Pupillo]{pecorari2025high}
Laura Pecorari, Sven Jandura, Gavin~K Brennen, and Guido Pupillo.
\newblock High-rate quantum ldpc codes for long-range-connected neutral atom registers.
\newblock \emph{Nature Communications}, 16\penalty0 (1):\penalty0 1111, 2025{\natexlab{b}}.

\bibitem[Quintavalle et~al.(2023)Quintavalle, Webster, and Vasmer]{quintavalle2023partitioning}
Armanda~O Quintavalle, Paul Webster, and Michael Vasmer.
\newblock Partitioning qubits in hypergraph product codes to implement logical gates.
\newblock \emph{Quantum}, 7:\penalty0 1153, 2023.

\bibitem[Raussendorf and Harrington(2007)]{raussendorf2007fault}
Robert Raussendorf and Jim Harrington.
\newblock Fault-tolerant quantum computation with high threshold in two dimensions.
\newblock \emph{Physical review letters}, 98\penalty0 (19):\penalty0 190504, 2007.

\bibitem[Raveendran and Vasi{\'c}(2021)]{raveendran2021trapping}
Nithin Raveendran and Bane Vasi{\'c}.
\newblock Trapping sets of quantum ldpc codes.
\newblock \emph{Quantum}, 5:\penalty0 562, 2021.

\bibitem[Richardson and Urbanke(2008)]{richardson2008modern}
Tom Richardson and Ruediger Urbanke.
\newblock \emph{Modern coding theory}.
\newblock Cambridge university press, 2008.

\bibitem[Roffe(2022)]{RoffeStimBPOSD2022}
Joschka Roffe.
\newblock {LDPC: Python tools for low density parity check codes}, 2022.
\newblock URL \url{https://pypi.org/project/ldpc/}.

\bibitem[Roffe et~al.(2020)Roffe, White, Burton, and Campbell]{RoffeBPOSD2020}
Joschka Roffe, David~R. White, Simon Burton, and Earl Campbell.
\newblock Decoding across the quantum low-density parity-check code landscape.
\newblock \emph{Physical Review Research}, 2\penalty0 (4), Dec 2020.
\newblock ISSN 2643-1564.
\newblock \doi{10.1103/physrevresearch.2.043423}.
\newblock URL \url{http://dx.doi.org/10.1103/PhysRevResearch.2.043423}.

\bibitem[Siegel et~al.(2024)Siegel, Strikis, and Fogarty]{siegel2024towards}
Adam Siegel, Armands Strikis, and Michael Fogarty.
\newblock Towards early fault tolerance on a 2$\times$ n array of qubits equipped with shuttling.
\newblock \emph{PRX Quantum}, 5\penalty0 (4):\penalty0 040328, 2024.

\bibitem[Siegel et~al.(2025)Siegel, Cai, Jnane, Koczor, Pexton, Strikis, and Benjamin]{siegel2025snakes}
Adam Siegel, Zhenyu Cai, Hamza Jnane, Balint Koczor, Shaun Pexton, Armands Strikis, and Simon Benjamin.
\newblock Snakes on a plane: mobile, low dimensional logical qubits on a 2d surface.
\newblock \emph{arXiv preprint arXiv:2501.02120}, 2025.

\bibitem[Steane(1996)]{steane1996multiple}
Andrew Steane.
\newblock Multiple-particle interference and quantum error correction.
\newblock \emph{Proceedings of the Royal Society of London. Series A: Mathematical, Physical and Engineering Sciences}, 452\penalty0 (1954):\penalty0 2551--2577, 1996.

\bibitem[Strikis and Berent(2023)]{strikis2023quantum}
Armands Strikis and Lucas Berent.
\newblock Quantum low-density parity-check codes for modular architectures.
\newblock \emph{PRX Quantum}, 4\penalty0 (2):\penalty0 020321, 2023.

\bibitem[Tham et~al.(2025)Tham, Ye, Khait, Gamble, and Delfosse]{Tham2025}
Edwin Tham, Min Ye, Ilia Khait, John Gamble, and Nicolas Delfosse.
\newblock Distributed fault-tolerant quantum memories over a {2xL} array of qubit modules.
\newblock \emph{arXiv:2508.01879}, 2025.
\newblock URL \url{https://doi.org/10.48550/arXiv.2508.01879}.

\bibitem[Tiew and Breuckmann(2025)]{tiew2025low}
Ryan Tiew and Nikolas~P Breuckmann.
\newblock Low-overhead entangling gates from generalised dehn twists.
\newblock \emph{IEEE Transactions on Information Theory}, 2025.

\bibitem[Tremblay et~al.(2022)Tremblay, Delfosse, and Beverland]{tremblay2022constant}
Maxime~A Tremblay, Nicolas Delfosse, and Michael~E Beverland.
\newblock Constant-overhead quantum error correction with thin planar connectivity.
\newblock \emph{Physical Review Letters}, 129\penalty0 (5):\penalty0 050504, 2022.

\bibitem[Xu et~al.(2024)Xu, Bonilla~Ataides, Pattison, Raveendran, Bluvstein, Wurtz, Vasi{\'c}, Lukin, Jiang, and Zhou]{xu2024constant}
Qian Xu, J~Pablo Bonilla~Ataides, Christopher~A Pattison, Nithin Raveendran, Dolev Bluvstein, Jonathan Wurtz, Bane Vasi{\'c}, Mikhail~D Lukin, Liang Jiang, and Hengyun Zhou.
\newblock Constant-overhead fault-tolerant quantum computation with reconfigurable atom arrays.
\newblock \emph{Nature Physics}, 20\penalty0 (7):\penalty0 1084--1090, 2024.

\bibitem[Xu et~al.(2025)Xu, Zhou, Zheng, Bluvstein, Ataides, Lukin, and Jiang]{xu2025fast}
Qian Xu, Hengyun Zhou, Guo Zheng, Dolev Bluvstein, J.~Pablo~Bonilla Ataides, Mikhail~D. Lukin, and Liang Jiang.
\newblock Fast and parallelizable logical computation with homological product codes.
\newblock \emph{Phys. Rev. X}, 15:\penalty0 021065, May 2025.
\newblock \doi{10.1103/PhysRevX.15.021065}.
\newblock URL \url{https://link.aps.org/doi/10.1103/PhysRevX.15.021065}.

\bibitem[Ye and Delfosse(2025)]{Ye2025}
Min Ye and Nicolas Delfosse.
\newblock Quantum error correction for long chains of trapped ions.
\newblock \emph{arXiv:2503.22071}, 2025.
\newblock URL \url{https://doi.org/10.48550/arXiv.2503.22071}.

\end{thebibliography}

\appendix
\section{Compressed Circuit}\label{appendix:PackedCircuit}

\Cref{fig:PackedCircuit} illustrates the syndrome extraction circuit described in \cref{algorithm:sparse_cyclic_layout}.
Each controlled-X and controlled-Z depicted in \cref{fig:PackedCircuit} are in fact {\em many} such gates, acting between the qubit sets $\q_{(*,*,0)}$, $\q_{(*,*,1)}$, and ancillae sets $\anc_{(*,*,X)}$, $\anc_{(*,*,Z)}$ (see \cref{sec:Layout} for indexing and notation).
Vertical dashed lines indicate time progression.

For HGP codes in general, the controlled-X and controlled-Z sub-circuits residing between times $t_2$ and $t_3$ arise from matrices $I\otimes B$ and $I\otimes B^T$ respectively, and contain exactly the same number of physical gates.
The same is true of the sub-circuits between $t_3$ and $t_4$, which arise from $A\otimes I$ and $A^T\otimes I$.

For \CxC codes specifically, the controlled-X and controlled-Z sub-circuits in each time step further has the property that they have identical depths, and are {\em maximally packed} (see \cref{remark:NonModular}) in the sense that they contain no idle qubits.

The blue dashed box in \cref{fig:PackedCircuit} indicates operations corresponding to one syndrome round.
Note that consecutive syndrome rounds are interleaved: controlled-X gates between $\q_{(0,*,*)}\leftrightarrow\anc_{(X,*,*)}$ for a subsequent syndrome round is performed in the same time step as controlled-Z gates between $\q_{(1,*,*)}\leftrightarrow\anc_{(Z,*,*)}$ for a previous syndrome round ({\em e.g.} between $t_3$ and $t_4$).

\begin{figure}
    \begin{centering}
    \includegraphics[width=8.5cm]{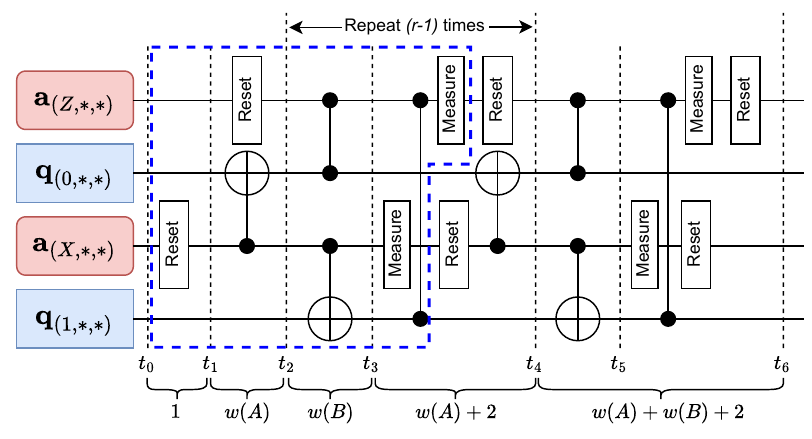}
    \par\end{centering}
    \caption{Packed syndrome extraction circuit for HGP.
    Each controlled-X and controlled-Z gate shown here is in fact many such physical gates.
    Here, $\q_{(*,*,*)}$ and $\anc_{(*,*,*)}$ are qubit and ancilla registers respectively (see \cref{sec:Layout} for indexing notation). Depth intervals shown do not include cyclic shifts.}
    \label{fig:PackedCircuit}
\end{figure}

\section{Fit Parameters}\label{appendix:FitParams}

\Cref{tab:FitParams} shows fit parameters for heuristic function for the logical error rate curves shown in \cref{fig:rep_circuitperf,fig:sym_circuitperf}.

\begin{table}[h]
    \begin{centering}
    \begin{tabular}{|c|c|c|c|c|}
    \hline 
    Code & $\alpha$ & $\beta (\times 10^{2})$ & $\gamma (\times 10^{3})$ & $\omega$ \tabularnewline
    \hline 
    \hline 
    {[}{[}450,32,8{]}{]} (\CTwo) & $16.22$ & $2.66$ & $4.336$ & 6\tabularnewline
    \hline 
    {[}{[}882,98,8{]}{]} (\CTwo) & $18.01$ & $6.05$ & $-72.88$ & 8\tabularnewline
    \hline 
    {[}{[}882,50,10{]}{]} (\CTwo) & $20.09$ & $2.71$ & $25.16$ & 6\tabularnewline
    \hline 
    {[}{[}336,20,6{]}{]} (\textsf{CxR}) & $12.02$ & $4.572$ & $-28.04$ & 6\tabularnewline
    \hline 
    {[}{[}336,14,8{]}{]} (\textsf{CxR}) & $15.10$ & $5.650$ & $-16.85$ & 6\tabularnewline
    \hline 
    {[}{[}240,8,8{]}{]} (\textsf{CxR}) & $14.30$ & $4.765$ & $-4.284$ & 5\tabularnewline
    \hline 
    {[}{[}420,10,10{]}{]} (\textsf{CxR}) & $17.89$ & $7.087$ & $-14.71$ & 5\tabularnewline
    \hline 
    {[}{[}620,20,10{]}{]} (\textsf{CxR}) & $16.89$ & $21.78$ & $-172.9$ & 7\tabularnewline
    \hline 
    {[}{[}625,25,8{]}{]} (ML-opt.) & $19.30$ & $8.299$ & $-134.4$ & $-$\tabularnewline
    \hline 
    \end{tabular}
    \par\end{centering}
    \caption{Fit parameters for logical error rate heuristic $\tilde{p}_{\log} = p^{d/2}e^{\alpha+\beta p+\gamma p^2}$.}\label{tab:FitParams}
\end{table}

\section{Code Table}\label{appendix:CodeTable}
\begin{table}[h]
\begin{centering}
\begin{tabular}{|c|c|c|c|c|c|}
\hline 
${\cal A}(x)$ & $n_{c}$ & $k_{c}$ & $d_{c}$\tabularnewline
\hline 
\hline 
$1+x+x^{4}$ & $15$ & 4 & 8\tabularnewline
\hline 
$1+x+x^{5}$ & $21$ & 5 & 10\tabularnewline
\hline 
$1+x^{2}+x^{4}+x^{10}$ & $28$ & 10 & 6\tabularnewline
\hline 
$1+x+x^{3}+x^{8}$ & $21$ & 7 & 8\tabularnewline
\hline 
$1+x+x^{2}+x^{7}$ & $30$ & 6 & 14\tabularnewline
\hline 
$1+x+x^{2}+x^{6}+x^{27}$ & $31$ & 10 & 10\tabularnewline
\hline 
$1+x+x^{3}+x^{9}+x^{10}$ & 31 & 10 & 12\tabularnewline
\hline 
\end{tabular}
\par\end{centering}
\caption{Table of polynomials defining classical cyclic codes, and their code parameters, on which a selection of quantum codes highlighted in the main text are based.}
\label{tab:CodeTableClassical}
\end{table}

\begin{table}[h]
\begin{centering}
\begin{tabular}{|c|c|c|c|c|c|}
\hline 
$n_{c}$ & $k_{c}$ & $d_{c}$ & \CTwo & \CxR\tabularnewline
\hline 
\hline 
$15$ & 4 & 8 & $[[450,32,8]]$ & $[[240,8,8]]$\tabularnewline
\hline 
$21$ & 5 & 10 & $[[882,50,10]]$ & $[[420,10,10]]$\tabularnewline
\hline 
$28$ & 10 & 6 & $[[1568,200,6]]${*} & $[[336,20,6]]$\tabularnewline
\hline 
$21$ & 7 & 8 & $[[882,98,8]]$ & $[[336,14,8]]$\tabularnewline
\hline 
$30$ & 6 & 14 & $[[1800,72,14]]$ & $[[840,12,14]]${*}\tabularnewline
\hline 
$31$ & 10 & 10 & $[[1922,200,10]]${*} & $[[620,20,10]]$\tabularnewline
\hline 
31 & 10 & 12 & $[[1922,200,12]]$ & $[[744,20,12]]${*}\tabularnewline
\hline 
\end{tabular}
\par\end{centering}
\caption{Table of corresponding \CTwo and \CxR quantum codes for each classical cyclic code.
Here, asterisks (*) denote a quantum code instance that was not explicitly discussed in main text either due to excessive block size or poorer distance / encoding rate than comparable codes in the table.}
\label{tab:CodeTableQuantum}
\end{table}

\Cref{tab:CodeTableClassical} show generating polynomials for a selection of classical cyclic codes, on top of which quantum \CTwo and \CxR codes highlighted in the text were built.
Therein, $\cal{A}$ and $x$ have the meaning of \cref{def:cHGP}.
In \cref{tab:CodeTableQuantum}, for each classical cyclic code we show parameters of the corresponding \CTwo and \CxR quantum codes.

\end{document}